\pgfplotsset{compat=1.11}
\newcommand*\circled[1]{\tikz[baseline=(char.base)]{
            \node[shape=circle,draw,inner sep=2pt] (char) {#1};}}
\theoremstyle{plain}
\newtheorem{theorem}{Theorem}[]
\newtheorem{lemma}[theorem]{Lemma}
\theoremstyle{definition}
\newtheorem{definition}[theorem]{Definition}
\newtheorem{example}[theorem]{Example}
\newtheorem{problem}[theorem]{Problem}
\newtheorem{algorithm}[theorem]{Algorithm}
\newtheorem{proposition}[theorem]{Proposition}
\newtheorem{question}[theorem]{Question}
\newtheoremstyle{named}{}{}{}{}{}{.}{.5em}{\thmnote{\textit{#1} }#3}
\theoremstyle{named}
\newtheorem*{subalgorithm}{Subalgorithm}
\theoremstyle{remark}
\author{
Georgiana Şurlea 
\institute{Department of Computer Science\\
West University\\
Timişoara, Romania}
\institute{}
\email{\quad \href{mailto:georgiana.surlea96@e-uvt.ro}{\texttt{georgiana.surlea96@e-uvt.ro}}}
\and 
Adrian Crăciun\thanks{Supported by ATCO, “Advanced techniques in combinatorial optimization and computational complexity”,
CNCS IDEI Grant PN-III-P4-ID-PCE-2016-0842.}
\institute{Department of Computer Science\\
West University\\
Timişoara, Romania}
\institute{Institute E-Austria Timişoara}
\email{\qquad \href{mailto:adrian.craciun@e-uvt.ro}{\texttt{adrian.craciun@e-uvt.ro}}}
}
\title{Gr\"obner Bases with Reduction Machines}
\begin{document}
\maketitle

\begin{abstract}
In this paper, we make a contribution to the computation of Gr\"obner bases. For polynomial reduction, instead of choosing the leading monomial of a polynomial as the monomial with respect to which the reduction process is carried out, we investigate what happens if we make that choice arbitrarily. It turns out not only this is possible (the fact that this produces a normal form being already known in the literature), but, for a fixed choice of reductors, the obtained normal form is the same no matter the order in which we reduce the monomials. 

To prove this, we introduce reduction machines, which work by reducing each monomial independently and then collecting the result. We show that such a machine can simulate any such reduction. We then discuss different implementations of these machines. Some of these implementations address inherent inefficiencies in reduction machines (repeating the same computations). We describe a first implementation and look at some experimental results. 
\end{abstract}

\section{Introduction}
The concept of Gr\"obner bases, together with an algorithm for the computation of Gr\"obner bases, introduced by Buchberger in \cite{buchberger1965algorithm}, represents an important contribution to symbolic computation. There are many applications of Gr\"obner bases computations, e.g. solving systems of polynomial equations, theorem proving in geometry, software and hardware verification, robotics, coding theory, oil extraction, etc. Descriptions of the applications of Gr\"obner bases can be found for example in~\cite{buchberger1998},~\cite{BuchKauers}. 

\subsection{Informal Description of Gr\"obner Bases}

We have a \emph{set of objects} (multivariate polynomials over a field), together with a \emph{reduction relation} $\rightarrow_F$, generated by a set $F$ (of "generators" for a polynomial ideal). The reduction relation is terminating. This induces a notion (and algorithm) for computing the \emph{reduced normal form} of a polynomial $s$ w.r.t $F$, $Red(s, F)$. 

We want to solve a (variant of the) \emph{word problem}: decide whether a polynomial belongs to the ideal generated by $F$. 
For this, using Buchberger's algorithm (Algorithm~\ref{alg:BuchbergerInformal}), we can construct $G$, a finite set of polynomials that generates the same ideal as $F$. The set $G$ is called \emph{a Gr\"obner basis} and it makes the ideal membership test easy. 

A Gr\"obner basis $G$ is characterized by the fact that certain \emph{critical polynomials}, built from pairs of polynomials in $G$ -- \emph{the S-polynomials} (computed by a function $Spol$) -- can be reduced to zero, i.e. $Red(Spol(f,g), G) = 0$.   

\begin{algorithm}[Buchberger's algorithm for computing Gr\"obner bases]\textit{ }
\label{alg:BuchbergerInformal}
\begin{lstlisting}[escapeinside={*}{*}]
    Input:  *$F$* a finite set of polynomials. 
    Output: *$G$* a finite *Gr\"obner* basis equivalent to *$F$*
        *$G$ := $F$*
        *$C$ := $G \times G$*
        while *$C \neq 0$* do
            Choose a pair *$(f, g) \in C$*
            *$C$* := *$C \setminus \{(f, g)\}$*
            *$h$* := *$Red(Spol(f,g), G)$*
            if *$h \neq 0$* then 
                *$C$*:=*$C \cup (G\times \{h\})$*
                *$G$*:= *$G \cup \{h\}$*
        return *$G$*
\end{lstlisting}
\end{algorithm}


\subsection{Improvements of the Gr\"obner Bases Algorithm}
\label{GBimprovements}
Due to the high complexity of the problem (double exponential in the worst case, see~\cite{MAYR1982}), there have been (and are continuing) efforts for improving performance. We outline some of them here. 

\emph{Criteria to avoid useless reductions} have been proposed by Buchberger in \cite{Buchberger1979}. \emph{Reduced Gr\"obner bases} were proposed in~\cite{buchberger1985}: polynomials in the basis being constructed (before the start of the algorithm), as well as before adding new polynomials to the basis, should be reduced among each other in order to eliminate redundant polynomials in the basis. 

The algorithm for computing Gr\"obner bases leaves some choices open. \emph{Which critical pair should be selected} -- simple criteria like minimal degree for the least common multiple of the components were proposed in~\cite{buchberger1985}, as well as more sophisticated ones like the "sugar" strategy, see~\cite{Giovini1991}. \emph{Which element from the basis is chosen for the current reduction step} -- a discussion of the possible choices can be found in~\cite{mora2003solving}.  

Further improvements looked at combining techniques from linear algebra to allow multiple reductions of S-polynomials, e.g. the F4~\cite{FAUGEREF4} and F5~\cite{FaugereF5} algorithms, or techniques from partial differential equations together with a restriction of the notion of reduction to construct \emph{involutive bases} containing Gr\"obner bases, see~\cite{GERDT1998519}, also~\cite{APEL20051131}. 

\subsection{Contribution and Organization}

Our focus in this paper is the basic process of reducing a polynomial modulo a set of polynomials, which is the most time consuming part in the execution of the Gr\"obner bases algorithm. Traditionally, for the polynomial being reduced, one selects the leading monomial (w.r.t. some admissible ordering) and reduces it (in a while loop, for each such situation). 

What would happen if we choose a different monomial? Would reduction work? Yes, a normal form would be computed. Moreover, in certain conditions (if we keep the choice of reductors, i.e. the selection strategy), no matter the choice of the monomial, we get \emph{the same} normal form. Would different choices lead to a faster solution?  Probably not, but sometimes it can be done just as fast. However, this line of inquiry led to what turns out to be a parallel method to do reduction, which has the potential to improve the performance. 

Our paper is organized as follows. Section~\ref{preliminaries} provides the basic concepts needed, including standard reduction. In Section~\ref{lazyreduction} we introduce \emph{reduction processes}, objects that express all possible ways in which we can perform reductions.  In Section~\ref{quickReduction} we introduce \emph{reduction machines} and prove that for a fixed choice of reductors these are equivalent to the corresponding reduction process. 
Section~\ref{implementation} discusses algorithms that implement reduction machines as well as our initial implementation of Gr\"obner bases with reduction machines and some experiments on the impact of reduction machines on the computation of Gr\"obner bases. Section~\ref{related} discusses related work, while Section~\ref{conclusions} concludes with a discussion of open problems and future research directions.



\section{Preliminaries}
\label{preliminaries}

Since the focus of this paper is reduction, we recall some relevant notions. For this we use the notation and definitions from~\cite{buchberger1998}. For the complete details, we refer the reader to standard textbook presentations of Gr\"obner bases, e.g.~\cite{buchberger1998},~\cite{becker1993grobner}, etc.

\begin{longtable}{rl}
$\left(\mathbb{K}, +, 0, -, \cdot, 1, /\right)$ & a \textit{field}, \\
$\mathbb{K}[x_{1}, \ldots, x_{n}]$ & the ring of $n$-variate polynomials over $\mathbb{K}$, \\
$[x_1, \ldots, x_n]$ & power products over $x_1, \ldots, x_n$,\\
$f, g, h, p, q, s$ & polynomials in $\mathbb{K}[x_{1}, \ldots, x_{n}]$,\\
$F, G$ & finite subsets of $\mathbb{K}[x_{1}, \ldots, x_{n}]$, \\
$m$ & monomials, \\
$t, u, pp$ & power products from $[x_1, \ldots, x_n]$, \\
$a, b, c, d$ & elements in $\mathbb{K}$, \\
$i, j, k, l, n, o, q$ & natural numbers, \\
$t|u$& $t$ divides $u$,\\
$t/u$& division of power products (in case $t|u$), \\
$C(m)$ & the coefficient of monomial $m$, \\
$C(p, t)$ & the coefficient at $t$ in the polynomial $p$,\\
$M(p, t) = C(t, p)\cdot t$ & the monomial at $t$ in $p$, \\
$S(p) = \left\{t | C(p, t)\neq 0 \right\}$ & the support of polynomial $p$.\\
\end{longtable}

Moreover, we will use the following shortcut notation: $$\mathbf{P}:=\mathbb{K}[x_1, \ldots, x_n], \mathbf{T}:=[x_1, \ldots, x_n].$$ 

\begin{definition}[Admissible ordering]
Let $\prec$ be a total ordering on $\mathbf{T}$. Then, 
$$\prec \text{ is admissible }:\Leftrightarrow \begin{array}{l}
\forall t\neq 1 (1\prec t), \\
\forall t, u, v (t\prec u \Rightarrow t\cdot v \prec u \cdot v). 
\end{array}
$$

\end{definition}
Admissible orderings include the lexicographic ordering and reverse lexicographic ordering, total degree lexicographic and reverse lexicographic. See, for example,~\cite{becker1993grobner} for technical details.  
\begin{proposition}[Properties of admissible orderings]

Let $\prec$ be an admissible ordering on $\mathbf{T}$. Then, 
$$
\begin{array}{l}
\forall t, u (t| u \Rightarrow t \preceq u), \\
\prec \text{ is well founded}. 
\end{array}
$$
\end{proposition}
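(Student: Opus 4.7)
The plan is to handle the two claims separately, since they have quite different characters: the first is a short direct consequence of the admissibility axioms, while the second requires an extra combinatorial ingredient (Dickson's lemma).

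For the divisibility claim $t \mid u \Rightarrow t \preceq u$, I would unfold the definition of divisibility to write $u = t\cdot v$ for some $v \in \mathbf{T}$ and then split on whether $v = 1$. If $v = 1$, then $u = t$ and $t \preceq u$ is immediate from totality/reflexivity. If $v \neq 1$, the first axiom of admissibility gives $1 \prec v$; the second axiom, applied by multiplying both sides by $t$, then yields $1 \cdot t \prec v \cdot t$, i.e.\ $t \prec t\cdot v = u$. Combining the two cases gives $t \preceq u$. This is essentially a one-line calculation, so there is no real obstacle.

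For well-foundedness, I would argue by contradiction: suppose there is an infinite strictly descending chain $t_1 \succ t_2 \succ t_3 \succ \cdots$ in $\mathbf{T}$. The idea is to bring in Dickson's lemma, which says that the divisibility order on $[x_1,\ldots,x_n]$ is a well-quasi-ordering, so from any infinite sequence of power products one can extract indices $i < j$ with $t_i \mid t_j$. Applying the first part of the proposition then gives $t_i \preceq t_j$, directly contradicting $t_i \succ t_j$ from the assumed descending chain. Hence no such chain exists and $\prec$ is well founded.

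The only genuine step here is invoking Dickson's lemma, which is standard (and cited in the references the paper points to, e.g.\ \cite{becker1993grobner}); I would simply cite it rather than reprove it. The main subtlety to watch is making sure the application of the second admissibility axiom in part one is the right way around (the axiom is phrased as $t \prec u \Rightarrow t\cdot v \prec u\cdot v$, with the multiplier on the right), so I would be careful to instantiate it as ``$1 \prec v$ implies $1\cdot t \prec v\cdot t$'' rather than mixing up the factors. Beyond that, the proof is essentially a two-line argument plus a citation.
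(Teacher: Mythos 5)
Your proof is correct. The paper itself states this proposition without proof, deferring to standard references such as \cite{becker1993grobner}; your argument (the two-case divisibility computation from the admissibility axioms, plus well-foundedness via Dickson's lemma applied to a hypothetical infinite descending chain) is exactly the standard textbook proof those references give, so there is nothing to add.
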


Admissible orderings allow decompositions of polynomials:

\begin{longtable}{rl}
$LPP_{\prec}(p)=max_{\prec}(S(p))$ & the leading power product of $p$, \\
$LC_{\prec}(p)=C(p, LPP_{\prec}(p))$& the leading coefficient of $p$, \\
$LM_{\prec}(p)=LC_{\prec}(p)\cdot LPP_{\prec}(p)$ & the leading monomial of $p$,\\
$R_{\prec}(p)= p - LM_{\prec}(p)$ & the remaining part of $p$. \\

\end{longtable}
Note that in case the admissible ordering is fixed, we will not write it as a subscript (e.g. will write $LPP(p)$ instead of $LPP_{\prec}(p)$).  






\begin{definition}[Reduction modulo polynomials]\textit{ }

\emph{$g$ reduces to $h$ modulo $f$ using the power product $t$}, 
$$g\rightarrow_{f,t} h :\Leftrightarrow t \in S(g) \wedge LPP(f)|t \wedge h = g -(M(g, t)/LM(f))\cdot f. $$

\emph{$g$ reduces to $h$ modulo $f$}, 
$$g\rightarrow_{f} h :\Leftrightarrow \exists t \in S(g) (g\rightarrow_{f, t}h).$$

\emph{$g$ reduces to $h$ modulo the set $F$ of polynomials}. 
$$ g\rightarrow_{F} h:\Leftrightarrow \exists f \in F (g\rightarrow_{f} h).$$

\end{definition}

\begin{definition}[Normal forms]\textit{ }

\emph{$g$ is in normal form modulo $F$}, 
$$\underline{g}_{F}:\Leftrightarrow \nexists h (g\rightarrow_{F}h).$$

Let ${\rightarrow_{F}}^*$ be the reflexive transitive closure of $\rightarrow_{F}$. \emph{$h$ is a normal form of $g$ modulo $F$} iff 

$$g{\rightarrow_{F}}^{*}h \wedge \underline{h}_{F}.$$

\end{definition}

The reduction process can be iterated algorithmically.

\begin{definition}[Normal form algorithm]
An algorithm $S$ is called \emph{a normal form algorithm} (or \emph{simplifier}) iff
$$g{\rightarrow_{F}}^{*}S(F, g) \wedge \underline{S(F,g)}_{F}.$$

\end{definition}


Some textbooks, e.g.~\cite{cox}, propose a maximal normal form algorithm, where the reduction is done on the leading monomial.  We will refer to it in this paper as the \emph{classic reduction algorithm}.

\begin{lemma}
The classic reduction algorithm (Algorithm~\ref{standardReduction}) is a normal form algorithm. 
\begin{algorithm}[Classic reduction]\textit{ }
\label{standardReduction}
\begin{lstlisting}[escapeinside={*}{*}]
    *$h := g$*
    while exists *$f \in F$* such that *$LPP(f)|LPP(h)$* do
        choose *$f \in F$* such that *$h \rightarrow_{f, LPP(h)}$* 
        *$h := h - \dfrac{1}{LC(h)} \cdot \dfrac{LPP(h)}{LPP(f)} \cdot f$*
    return h
\end{lstlisting}
\end{algorithm}
\end{lemma}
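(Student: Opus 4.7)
To prove that Algorithm~\ref{standardReduction} is a normal form algorithm, the three defining clauses must be verified for the returned $h$: termination of the loop, $g \rightarrow_{F}^{*} h$, and $\underline{h}_{F}$.

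Termination follows from the well-foundedness of $\prec$ (part of the proposition on admissible orderings). At each iteration the update subtracts from $h$ a scalar-power-product multiple of $f$ whose leading monomial is exactly $LM(h)$, so the two leadings cancel and either the new $h$ is $0$ or its $LPP$ is strictly $\prec$-below the previous one. A well-founded descent on $LPP(h)$ then forces the loop to halt. The reducibility chain $g \rightarrow_{F}^{*} h$ is a direct induction on iteration count: the base case $h = g$ is trivial, each assignment is by construction a single $\rightarrow_{f, LPP(h)}$ step and hence an instance of $\rightarrow_{F}$, and composing with the inductive hypothesis preserves the invariant.

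The part demanding the most care is $\underline{h}_{F}$ for the returned $h$, and this is where I expect the main obstacle to sit. Upon loop exit we directly know only that no $f \in F$ has $LPP(f) \mid LPP(h)$, i.e.\ $h$ is \emph{head}-irreducible, whereas $\underline{h}_{F}$ rules out reduction at \emph{any} $t \in S(h)$, not only at $LPP(h)$. The natural bridge is to apply the same argument recursively to the remaining part $R(h) = h - LM(h)$, whose leading power product is strictly $\prec$-smaller, collecting the irreducible head at each stage; well-founded induction on $LPP$ terminates this outer recursion and yields a polynomial whose every monomial is irreducible modulo $F$. Equivalently, one can read the loop guard as quantifying over all $t \in S(h)$ rather than only $LPP(h)$, in which case the conclusion is immediate while the termination argument still goes through because $\prec$ is well-founded on $\mathbf{T}$.
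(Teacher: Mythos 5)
Your main line of argument is correct, but it is worth saying up front that it takes a genuinely different route from the paper only because the paper takes no route at all: the authors simply state that ``the correctness and termination of the algorithm can be found in'' Becker--Weispfenning and supply no proof. Everything you wrote is therefore added content, and the most valuable part is exactly the obstacle you flag. As literally written, the loop guard of Algorithm~\ref{standardReduction} only tests whether some $LPP(f)$ divides $LPP(h)$, so on exit $h$ is merely \emph{head}-irreducible, whereas the paper's definition of $\underline{h}_{F}$ (via $\rightarrow_{f}$, which quantifies over all $t \in S(h)$) requires that no power product in the support be divisible by any $LPP(f)$. Your repair --- peel off the irreducible leading monomial and recurse on $R(h)$, under well-founded induction on $LPP$ --- is precisely the missing step and is sound: every power product occurring in any reduct of the tail is $\preceq LPP(R(h)) \prec LPP(h)$, so the stored heads can never be cancelled or become reducible, and the concatenation of all inner-loop steps is still a $\rightarrow_{F}^{*}$ chain because $\rightarrow_{F}$ permits reduction at non-leading power products. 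The termination clause (strict $\prec$-descent of $LPP(h)$ after the leading terms cancel) and the $g \rightarrow_{F}^{*} h$ clause (induction on iteration count) are handled correctly.

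Two minor points. First, in your alternative reading where the loop guard ranges over all $t \in S(h)$, termination does \emph{not} follow merely from well-foundedness of $\prec$ on $\mathbf{T}$: reducing a non-leading monomial leaves $LPP(h)$ unchanged, so the descent must be run on an induced well-founded order on polynomials (e.g.\ a multiset or lexicographic extension of $\prec$ to the tuple of power products), which is the argument Becker--Weispfenning actually give. Your primary route avoids this entirely and should be preferred. Second, the coefficient $\tfrac{1}{LC(h)}$ in the printed update step is a typo for $\tfrac{LC(h)}{LC(f)}$ (compare the definition of $\rightarrow_{f,t}$); your description of the step as subtracting a multiple of $f$ whose leading monomial is exactly $LM(h)$ is the intended reading and is what your termination argument needs.
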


The correctness and termination of the algorithm can be found in \cite{becker1993grobner}.

\begin{example}
Consider $F = \{ f_1, f_2 \}$, where $f_1 := x^2 + x -y,$ $f_2 := x - 2.$ The polynomials $f_1, f_2$ are ordered according to the degree lexicographic ordering. The leading power products are $x^2$, $x$, respectively, and the leading coefficients are 1 and 1. 

Consider $g := x^3 + x^2y + 2y$. After one step of reduction, $g$ reduces modulo $F$ to
$$h := x^2y - x^2 +xy + 2y.$$
Namely, 

$$h := g - 1 \cdot x \cdot f_1.$$
The classic reduction of $g$ modulo $F$ yields the following sequence:

\begin{tabular}{l l}
$g = x^3 + x^2y + 2y$ & (using $f_1$) \\
$\rightarrow_{F} x^2y - x^2 + xy + 2y$ & (using $f_1$) \\
$\rightarrow_{F} - x^2 + y^2 + 2y$ & (using $f_1$) \\
$\rightarrow_{F} y^2 + x + y$ & (using $f_2$) \\
$\rightarrow_{F} y^2 +y + 2$. & \\
\end{tabular}
\end{example}

\section{Reduction Processes}
\label{lazyreduction}

Now, we consider what happens if, for the polynomial being reduced, we allow the choice of any monomial, not just the maximal one, as described in Algorithm~\ref{standardReduction}. This is can be done, literature indicates the maximal choice is made for efficiency purposes (see, for example~\cite{buchberger1985}). Normal forms can be computed using other choices of monomials.  








\begin{definition}[Reduction process]
\label{ReductionProcesses}

Let $g = m_{1} + \dots + m_{n}$ and let $F = \{ f_{1}, \dots, f_{l} \}$, such that a \emph{selection strategy} for choosing elements from $F$ to use for reduction is fixed. A \emph{reduction process} of $g$ modulo $F$ represents all possible reductions of $g$ modulo $F$ to normal form.


\end{definition}

\begin{definition}[Monomial reduction sequence]
A \emph{monomial reduction sequence} is the sequence of monomials that were selected for reduction in the computation of a normal form  within a reduction process.
\end{definition}

The systematic enumeration of all reductions yields a tree, similar to that in Figure~\ref{fig:tree}.
A monomial reduction sequence implicitly describes a branch of that tree. 

\begin{figure}[ht!]
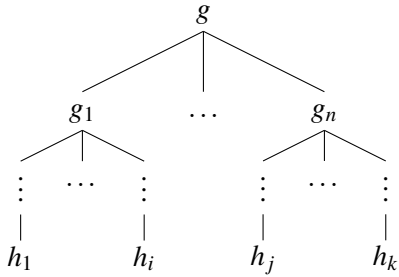

\Tree[.$g$ [.$g_{1}$ [.$\vdots$ [ .$h_{1}$ ]
                               ]
						       [.$\dots$ 
                               ]
						       [.$\vdots$ [ .$h_{i}$ ]
                               ]
				]
				[.$\ldots$ 
                ]
         	    [.$g_n$ [.$\vdots$ [ .$h_{j}$ ]
                               ]
						       [.$\dots$ 
                               ]
               			       [.$\vdots$ [ .$h_{k}$ ]
                               ]
                ]
     ]
\caption{Tree representation of a reduction process.}
\label{fig:tree}
\end{figure}

Now, for a reduction process, we consider the following questions: 
\begin{question}
     Does this process terminate? Yes, see~\cite{becker1993grobner} for the termination proof.
\end{question}
\begin{question}
Does this process lead to a normal form? Yes, also in~\cite{becker1993grobner}. 
\end{question}
\begin{question}
\label{question:uniqueNormalForm}
    Does a reduction process compute one normal form or more?
\end{question}
\begin{question}
\label{question:howQuickIsReduction}
Can we identify the most efficient branch in the reduction process? Is this the standard reduction from Algorithm~\ref{standardReduction}? Are there better choices? Is it easy to decide which choices would lead to short(er) reductions? 
\end{question}
\begin{example}
\label{eg:2}
Consider $F = \{ f_1, f_2 \}$, where 
$f_1 := x^2 + x -y,$
$f_2 := x - 2.$
The polynomials $f_1, f_2$ are ordered according to the degree lexicographic ordering. The leading power products are $x^2$, $x$, respectively, and the leading coefficients are $1$ and $1$. Consider 
$$g := x^3 + x^2y + 2y.$$

The (incomplete) reduction process is illustrated in Figure \ref{fig:ExampleTree}. Each node contains a polynomial. The \textbf{bold face} monomials are reducible and the edges correspond to the order in which monomials are selected for reduction, i.e. first edge corresponds to the reduction using the first monomial, and so on. Monomials that are {\color{gray}grayed out} are reducible, but for the sake of space we left out the corresponding branches (branches that are not shown lead to the same result). The rest of monomials are irreducible.

\begin{figure}[htbp]
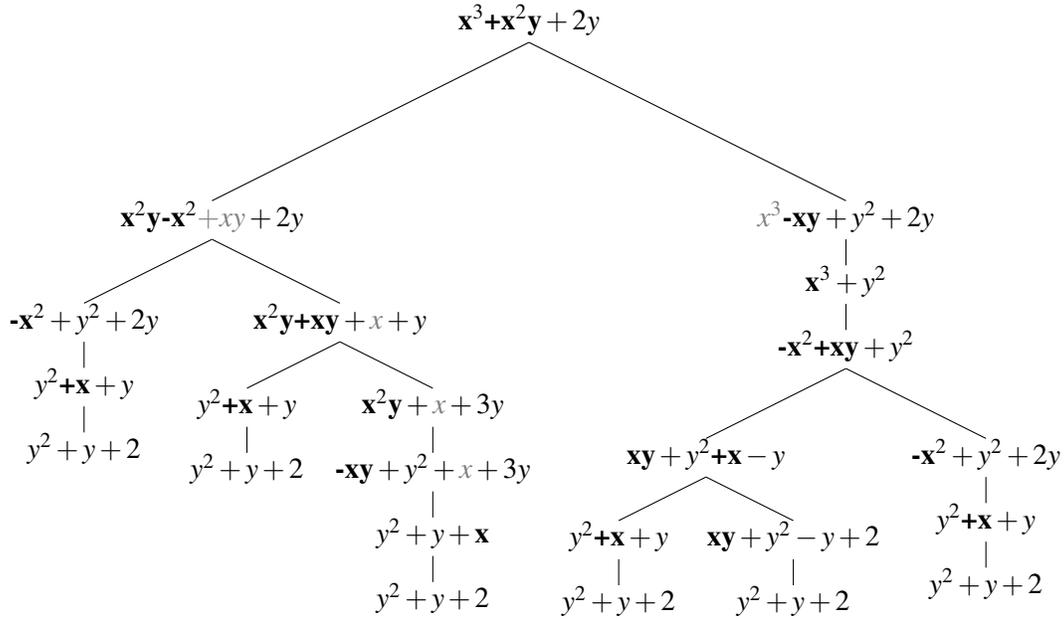

\Tree[.$\textbf{x^3}\textbf{+x^2y}+2y$ [.$\textbf{x^2y}\textbf{-x^2}\textcolor{gray}{+xy}+2y$ [.$\textbf{-x^2}+y^2+2y$ [ .$y^2\textbf{+x}+y$ [ .$y^2+y+2$ ] ]
                               ]
						       [.$\textbf{x^2y}\textbf{+xy}+\textcolor{gray}{x}+y$ [ .$y^2\textbf{+x}+y$ [ .$y^2+y+2$ ] ]
						                       [ .$\textbf{x^2y}+\textcolor{gray}{x}+3y$ [ .$\textbf{-xy}+y^2+\textcolor{gray}{x}+3y$ [ .$y^2+y+\textbf{x}$ [ .$y^2+y+2$ ]                                  ]
						                                      ] 
						                       ]
						      ]
				]
				[.$\textcolor{gray}{x^3}\textbf{-xy}+y^2+2y$
                                                [.$\textbf{x^3}+y^2$ [.$\textbf{-x^2}\textbf{+xy}+y^2$ [.$\textbf{xy}+y^2\textbf{+x}-y$ [.$y^2\textbf{+x}+y$ [.$y^2+y+2$ ] ]
                                                                                           [.$\textbf{xy}+y^2-y+2$ [.$y^2+y+2$ ] ]
                                                                            ]
                                                                            [.$\textbf{-x^2}+y^2+2y$ [.$y^2\textbf{+x}+y$ [.$y^2+y+2$ ] ] ] 
                                                            ] 
                                                ]
				              ]
     ]
     
\caption{Tree representation of the reduction process for Example~\ref{eg:2}.}
\label{fig:ExampleTree}
\end{figure}
\end{example}

Note that we obtain the same normal form on each of the branches. The leftmost branch corresponds to the classic reduction. Compared to the classic reduction, there is one other branch of the same length. The rest correspond to longer reduction chains.

\section{Reduction Machines}
\label{quickReduction}

To answer Question~\ref{question:uniqueNormalForm}, we will first introduce a few notions, then prove that the answer is Yes. 

\begin{definition}[Monomial substitution]
Let $m, m_1, \ldots, m_n$ be monomials, $f$ a polynomial and $F$ a set of polynomials, respectively, such that $m\rightarrow_F f$ and $f = m_1 + \ldots +m_n$. Then the sequence $m_1, \ldots, m_n$  is a \emph{monomial substitution} for $m$ modulo $F$. 
\end{definition}

\begin{definition}[Reduction thread]
Let $m$ be a monomial, $F$ a set of polynomials (with a fixed selection strategy). A \emph{reduction thread} for $m$ modulo $F$ replaces $m$ with its monomial substitution modulo $F$. The process is repeated for as long as there are reducible monomials. 
\end{definition}

Note that a reduction thread can be represented by a tree containing irreducible monomials as leaves. 

\begin{definition}[Reduction machine]
Let $g = m_1+ \ldots + m_n$ be a polynomial represented as the sum of monomials and $F$ a set of polynomials. A \emph{reduction machine} with inputs $g$ and $F$ is described in the following way:  
\begin{itemize}
    \item for each of the monomials $m_i, i=1, \ldots, n$, construct its reduction thread, 
    \item for the resulting sequence of reduction threads, accumulate the sum of all the leaves (irreducible monomials) and return the result. 
\end{itemize}
\end{definition}

Note that the reduction threads are independent of each other, by construction, so in principle they can be executed in the same time.

\begin{definition}[Execution trace]
An \emph{execution trace} of a reduction machine represents a configuration of the reduction machine, i.e. the reduction machine where the reduction threads have not been completed. This means that leaves of this execution trace are not necessarily irreducible. 
\end{definition}

\begin{example}
\label{example:comparison}
Let $g = 4x^3+2x^2y+7xy+2y$ and $F = \{x^2+x-y, x-2\}$ for which, as selection strategy, we choose polynomials according to the lexicographic ordering of their leading monomials. The classic reduction yields a sequence of the following form:

$$    \begin{array}{ll}
       \text{}                                         & 4x^3+2x^2y+7xy+2y \\ 
    \xrightarrow{\enspace\circled{1}\enspace} & 2x^2y-4x^2+11xy+2y \\ 
    \xrightarrow{\enspace\circled{2}\enspace} & -4x^2+9xy+2y^2+2y \\ 
    \xrightarrow{\enspace\circled{3}\enspace} & 9xy+2y^2+4x-2y \\ 
    \xrightarrow{\enspace\circled{4}\enspace} & 2y^2+4x+16y, \\ 
    \xrightarrow{\enspace\circled{5}\enspace} & 2y^2+16y+8
\end{array}$$

The corresponding reduction machine is represented in Figure~\ref{reductionMachine}.
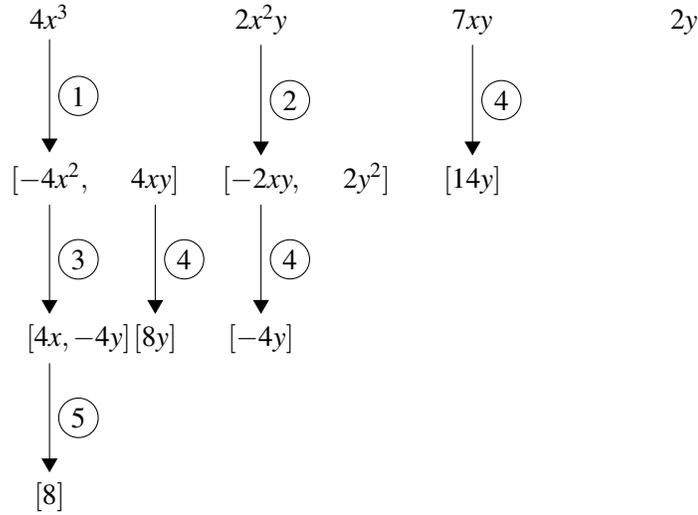
\begin{figure}[ht!]
\begin{center}
\begin{tikzpicture}[>=triangle 60]
  \matrix[matrix of math nodes,column sep={40pt,between origins},row
    sep={60pt,between origins},nodes={asymmetrical rectangle}] (s)
  {
    |[name=k1]| 4x^3 & |[name=k2]| & |[name=k3]| 2x^2y & &|[name=k4]| 7xy & & 2y \\
    |[name=A11]| [-4x^2, &|[name=A13]| 4xy] & |[name=A21]| [-2xy, & 2y^2] &|[name=A31]| [14y] \\
    |[name=B11]| \qquad [4x, -4y] &|[name=B12]| [8y] & |[name=B21]| [-4y]  \\
    |[name=C11]| [8] & & |[name=C21]|  \\
  };
  \draw[->] (k1) edge node{\qquad\circled{1}} (A11) 
            (k3) edge node{\qquad\circled{2}} (A21)
            (k4) edge node{\qquad\circled{4}} (A31)
            (A11) edge node{\qquad\circled{3}} (B11)
            (A13) edge node{\qquad\circled{4}} (B12)
            (A21) edge node{\qquad\circled{4}} (B21)
            (B11) edge node{\qquad\circled{5}} (C11)
  ;
\end{tikzpicture}
\end{center}
\caption{Reduction machine for Example~\ref{example:comparison}.}
\label{reductionMachine}
\end{figure}


Note that the reduction machine computes the same result as the standard reduction. It can simulate the steps of the standard reduction in a manner that we explain below, in the proof of Theorem~\ref{theorem:sameRemainder}. The numbers attached to the reduction arrows show the correspondence between the reduction machine and classic reduction. If the reduction threads are running in the same time, the reduction machine can compute the results faster than the classic reduction: classic reduction has depth 5, whereas the reduction machine has depth 3. However, if the reduction threads are running sequentially, the depth of the reduction machine becomes 7. 
\end{example}

\begin{theorem}
\label{theorem:sameRemainder}
Let $g = m_{1} + \dots + m_{n}$  let $F = \{ f_{1}, \dots, f_{l} \}$ be an ordered $l$-tuple of polynomials with a fixed selection strategy. Each branch of the reduction process of $g$ with respect to $F$ yields the same result.  
\end{theorem}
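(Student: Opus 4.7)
The plan is to show that every branch of the reduction process computes the same polynomial, namely the canonical value $R(g)$ produced by the reduction machine. Because the selection strategy is fixed, each monomial $m$ determines a unique reduction thread, a finite tree (finiteness by termination of $\to_F$) with root $m$ whose internal nodes have children given by their monomial substitution. Define $R(m)$ recursively as $m$ itself if $m$ is irreducible, and as $R(m_1)+\cdots+R(m_k)$ otherwise, where $(m_1,\ldots,m_k)$ is the monomial substitution of $m$ dictated by the strategy. Extend to polynomials by $R(g) := R(m_1)+\cdots+R(m_n)$ for any decomposition $g = m_1+\cdots+m_n$. The theorem will follow from two facts: (i) $R$ is well-defined on polynomials; and (ii) $R$ is invariant under single-step reduction $\to_{f,t}$. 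Then for any normal form $h$ reached from $g$ one has $R(g) = R(h) = h$, independent of the branch.

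For (i), I would first prove the linearity $R(c\cdot t) = c\cdot R(1\cdot t)$ by induction on the depth of the thread. The substitution of $c\cdot t$ via a reductor $f$ equals $c\cdot t - (c/LC(f))(t/LPP(f))f = c\cdot [\, t - (1/LC(f))(t/LPP(f))f\,]$, i.e., coefficient-wise $c$ times the substitution of $1\cdot t$. Applying the inductive hypothesis to each child closes the induction. Consequently $R(c_1 t) + R(c_2 t) = (c_1+c_2)\, R(1\cdot t) = R((c_1+c_2)\, t)$, which shows that $R$ is additive across arbitrary monomial decompositions and hence depends only on the polynomial $g$, not on how we write it as a sum of monomials.

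For (ii), suppose $g \to_{f,t} g'$, so that $g' = g - (C(g,t)/LC(f))(t/LPP(f))\, f$. Set $c := C(g,t)$ and split $g = (g - M(g,t)) + c\cdot t$. The reduction step replaces the monomial $c\cdot t$ by the polynomial $p := c\cdot t - (c/LC(f))(t/LPP(f))\, f$, which is precisely the polynomial formed by the monomial substitution of $c\cdot t$. Then by the recursive definition $R(c\cdot t) = R(p)$, and additivity yields $R(g) = R(g - M(g,t)) + R(c\cdot t) = R(g - M(g,t)) + R(p) = R(g')$. Iterating over any reduction chain $g \to_F^* h$ ending in a normal form gives $h = R(h) = R(g)$, a value that depends only on $g$ and $F$.

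The main obstacle will be step (i): the classical reduction operates on $M(g,t) = C(g,t)\cdot t$ as a single aggregated monomial, while the machine may spread that coefficient across several sibling leaves living in different threads. Reconciling the two views requires the additivity $R(c_1 t) + R(c_2 t) = R((c_1+c_2)\, t)$, which in turn rests on the subtler observation that the shape of a thread depends only on the power product of its root and not on its coefficient. This fact must be established carefully by induction, using the uniformity of the selection strategy together with the fact that the substitution polynomial scales linearly in the coefficient of the monomial being reduced.
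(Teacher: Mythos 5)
Your proposal is correct and proves the theorem, but it takes a noticeably different route from the paper. The paper argues by induction on the length of the monomial reduction sequence, maintaining an explicit correspondence between each node $p$ of the reduction process and an \emph{execution trace} of the reduction machine: when the process reduces a monomial $m = c\cdot t$ of $p$, the machine extends every partial thread whose leaf carries the power product $t$ (with coefficients $c_1,\dots,c_n$ summing to $c$) by the corresponding substitution, and collecting the new leaves recovers $p'$. You instead bypass execution traces entirely: you define the machine's final output as a recursive function $R$ on monomials, extend it additively to polynomials, and show it is a conserved quantity under one-step reduction, so that any normal form $h$ satisfies $h = R(h) = R(g)$. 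The two arguments rest on the same underlying fact --- the shape of a reduction thread depends only on the power product of its root, and the substitution scales linearly in the coefficient --- but you make this explicit as the lemma $R(c\cdot t) = c\cdot R(1\cdot t)$, whereas the paper uses it implicitly in the step where the leaves $c_1\cdot t,\dots,c_n\cdot t$ are all extended by ``the'' substitution of $t$ and the collected sum is asserted to equal $p'$. Your invariant-function formulation buys a cleaner well-definedness argument (additivity across arbitrary monomial decompositions, including the case of coefficients cancelling to zero) and avoids having to argue uniqueness of the execution trace; the paper's formulation buys a more operational picture of how the machine simulates each branch step by step, which is what motivates the implementation in the following section. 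One point you should state explicitly rather than leave as an assumption: the fixed selection strategy must choose the reductor $f$ as a function of the power product $t$ alone (not of the coefficient or of the surrounding polynomial), since both your linearity lemma and the identification of the process's reductor with the thread's reductor depend on it; this is consistent with the paper's intent but is the one hypothesis doing real work.
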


\begin{proof}
\label{proof:substitutions}

To prove the theorem, we will prove that the reduction process of $g$ with respect to $F$ and the associated reduction machine are equivalent. This means that since the reduction machine yields a unique result, so does the reduction process. 

We prove that any node $p$ in the reduction process is computed by one execution trace in the corresponding reduction machine. Let $s$ be the (partial) monomial reduction sequence that defines the path from the root of the reduction process to $p$.

We prove this by induction on the length of the monomial reduction sequence. 

\noindent \textbf{Base case}: 

In this case, $p$ is the root of the reduction process, therefore, by definition, the polynomial is represented in the reduction machine. 

\noindent \textbf{Induction step}: 

Assume that the property holds for monomial reduction sequence $S$, i.e. for polynomial $p$ defined by $S$ we have a unique corresponding execution trace in the reduction machine.  

Let $m=c\cdot t$ be a monomial chosen from $p$, where $c$ is its coefficient, $t$ its power product. We show the property is true for the monomial reduction sequence $S\cup \{m\}$. 

We write $p=m + R(p)$. Let $ET$ be the execution trace of the reduction machine that corresponds to polynomial $p$. Since $m$ is a monomial in $p$, it is collected from leaves of (partial) reduction threads from $ET$. Let these leaves be $c_1\cdot t, \ldots, c_n\cdot t$, $n\geq 1$, we know $c = c_1+\ldots c_n$. 

Let $p^{\prime}$ be the polynomial such that $p\rightarrow_{f, t} p^{\prime}$  by choosing the monomial $m$, using $f\in F$ defined by the fixed selection strategy. 

We show that there exists a unique execution trace $ET^{\prime}$ that corresponds to $p^{\prime}$. By the definition of reduction $$p^{\prime} = p - \frac{m}{LM(f)} \cdot R(f).$$ In other words, the monomial $m$ is substituted by the polynomial $$s = - \frac{m}{LM(f)} \cdot R(f),$$ i.e.   $$s = c\cdot \left(-\frac{t}{LM(f)} \cdot R(f)\right).$$

Now, let $ms_1, \dots, ms_k$ be the monomials of the polynomial $s$. We build $ET^{\prime}$ from $ET$ in the following way: we extend the reduction threads containing $c_1\cdot t, \ldots, c_n\cdot t$ as leaves by adding the corresponding substitution of $t$, obtaining new leaves $c_i \cdot ms_1, \dots, c_i \cdot ms_k$, $i = 1, \dots, n$. 

Collecting the leaves of $ET^{\prime}$ yields $p^{\prime}$. 

To summarize, we have shown that between making a step in the reduction process and extending an execution trace using a monomial there is a one-to-one correspondence, which concludes the proof.

\end{proof}

Theorem~\ref{theorem:sameRemainder} solves Question~\ref{question:uniqueNormalForm}. This establishes confluence for \emph{reduction with a fixed reductor selection strategy}, i.e. the choice of monomials that are to be reduced does not influence the final result of the reduction. This is different from the notion of confluence that occurs in the context of Gr\"obner bases in the following sense: for polynomial set $F$,  $\rightarrow_F$ is not confluent in general, i.e. different choices of reductors from $F$ may lead to different normal forms. However, if $F$ is a Gr\"obner basis, then $\rightarrow_F$ is confluent.  

For Question~\ref{question:howQuickIsReduction}, we initially considered trying to determine criteria that would make a computation short, solving an optimization problem expressed in terms of degrees. However, the equivalence between reduction processes and reduction machines allows us to use the latter for doing reductions in parallel. We propose reduction machines as a candidate answer to Question~\ref{question:howQuickIsReduction}.

\section{Implementation and Experiments}
\label{implementation}
Here, we first present an algorithm that implements a reduction machine, then discuss some of its limitations and show how to improve it. We then briefly describe how we integrated these algorithms into a Java implementation of Gr\"obner bases. Finally, we discuss some experimental results. 

\subsection{Versions of Reduction Machines}
The following algorithm is a straight-forward implementation of a reduction machine.

\begin{algorithm}[Reduction machine]\textit{ }
\label{alg:quickReduction}
\begin{lstlisting}[escapeinside={*}{*}]
    *$M$ := monomials of $g$*
    *$h$ := 0*
    
    while *$M$ is not empty* do
        *$m := \text{a monomial in } M$*
        if *$m$ is reducible* then
            *$S$ := substitution($m$)*
            *$M := M \cup S$*
        else
            *$h := h + m$*
        *$M := M \setminus \{m\}$*
    return *$h$*
\end{lstlisting}
\end{algorithm}


Note, however, that this implementation has some inefficiencies. These are visible in Figure~\ref{reductionMachine}: the power product $xy$ appears in 3 separate reduction threads, therefore we need to perform the same reduction 3 times. Furthermore, there could be situations where we reduce the same power product several times, only for this to be cancelled out after collecting the result. 

We can optimize the implementation by introducing a caching mechanism that  will allow us to detect reductions that were already carried out and only update the coefficients. For this, we use the following structures: 
\begin{itemize}
    \item A \emph{sequence of monomials} with two types of markings. The marking $unused$ represents a monomial which has not yet been processed. The marking $og$ indicates a monomial of $g$. 
    \item A \emph{directed graph} to represent reduction threads. A vertex in this graph consists of a power product and a set of multiples. These multiples contain coefficients of the power products from different reduction threads, as well as coefficients inherited (by reduction) from parent vertices (if applicable). Edges in the graph show reduction steps. 
\end{itemize}




We now give the details.

\begin{algorithm}[Reduction machine with caching]\textit{}
\label{alg:quickReductionCaching}
\begin{lstlisting}[escapeinside={*}{*}]
    *$M$ :=  monomials of $g$*  (*$unused$, $og$*)
    *$G$* := empty graph
    
    while *exists $unused$ in $M$* do
        *$m := \text{an $unused$ monomial in } M$*
        if *$m$ is $og$* then 
            if existsVertex(*$pp(m)$, $G$*) then
                updateVertex(*$pp(m)$, $c(m)$, $G$*)
            else 
                createVertex(*$pp(m)$, $c(m)$, $G$*)
        if *$m$ is reducible* then
            *$S$ := substitution($pp(m)$)*
            expand(*$pp(m)$, $S$, $G$*)
            update(*$M$, $S$*)
        mark(*$m$, used*)
    return collectRemainder(*$G$*)
\end{lstlisting}
\end{algorithm}

\begin{subalgorithm} [expand ($pp$, $S$, $G$)]\textit{}
\label{subalg:expand}
\begin{lstlisting}[escapeinside={*}{*}]
    *$source$* := getVertex(*$pp$, $G$*)
    mark(*$source$*, reducible)
    for each *$m$ in $S$* do 
        if containsVertex(*$pp(m)$, $G$*) then
            *$destination$ := getVertex($pp(m)$, $G$)*
            addMultiple(*$c(m)$, $source$, $destination$*)
        else
            createVertex(*$pp(m)$, $c(m)$, $source$*)
            *$destination$ := getVertex($pp(m)$, $G$)*
        addEdge(*$source$, $destination$, $G$*)
\end{lstlisting}
\end{subalgorithm}

\begin{subalgorithm}[update ($M$, $S$)]\textit{}
\label{subalg:update}
\begin{lstlisting}[escapeinside={*}{*}]
    for each *$m$ in $S$* do
        if *$pp(m)$ does not exist in $M$* then
            *$M$ := $M \cup \{m\}$*
\end{lstlisting}
\end{subalgorithm}

\begin{subalgorithm}[collectRemainder ($G$)]\textit{}
\label{subalg:create}
\begin{lstlisting}[escapeinside={*}{*}]
    *$h$ := 0*
    *$I$ := irreducibleVertices($G$)*
    for each *$v$ in $I$* do
        *$c$ = collectCoefficients($v$, $G$)*
        if *$c \neq 0$* then
            *$h$ := $h$ + $c \cdot \text{getPowerProduct(}v$)*
    return *$h$*
\end{lstlisting}
\end{subalgorithm}

\begin{subalgorithm}[collectCoefficients ($v$, $G$)]\textit{}
\label{subalg:create1}
\begin{lstlisting}[escapeinside={*}{*}]
    *$s$ := 0*
    *$C$ := getMultiples($v$, $G$)*
    for each *$m$ in $C$* do
        *$s$ := $s$ + getCoefficient($m$)*
        if hasParent(*$m$*) then
            *$s$ := $s \cdot \text{collectCoefficients(getParent(}m$))*
    return *$s$*
\end{lstlisting}
\end{subalgorithm}


The \verb|collectRemainder(G)| method starts from the vertices containing irreducible power products, i.e., leaf or isolated vertices, and propagates backwards the multiples across vertices containing reducible power products. An irreducible vertex inherits the coefficients of all reducible vertices connected to it.

\begin{example}
Let $F = \{x^2+x-y, x-2\}$ and $g = 4x^3+2x^2y+7xy+2y$ be an ideal of polynomials and a polynomial, respectively, ordered with respect to the lexicographic ordering. The oriented graph constructed for this example is illustrated in Figure~\ref{reductionGraph}.

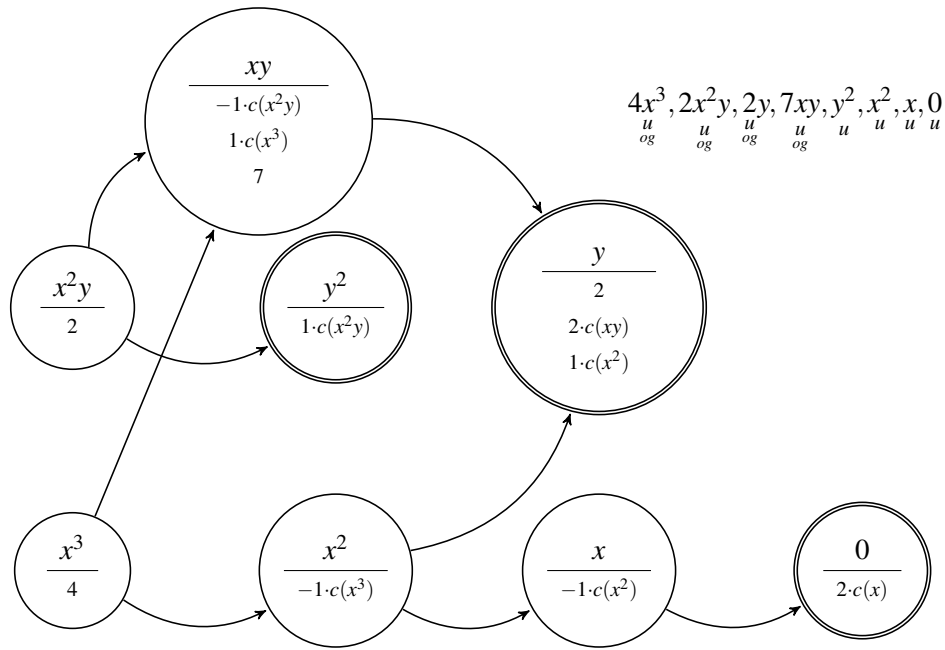
\begin{figure}[ht!]
\begin{center}
\begin{tikzpicture}[->,>=stealth',shorten >=1pt,auto,node distance=3.5cm,
                    semithick]
  \tikzstyle{every state}=[]

  \node[state]                  (A)                    {$\begin{array}{c}x^2 y \\\hline{\scriptstyle 2}\end{array}$};
  \node[state]                  (B) [below of=A]       {$\begin{array}{c}x^3\\\hline{\scriptstyle 4}\end{array}$};
  \node[state]                  (C) [above right of=A] {$\begin{array}{c}xy\\\hline \begin{array}{c}{\scriptstyle-1\cdot c(x^2y)}\\{\scriptstyle 1\cdot c(x^3)}\\ {\scriptstyle7}\end{array}\end{array}$};
  \node[state, accepting]       (D) [right of=A] {$\begin{array}{c}y^2\\\hline{\scriptstyle 1 \cdot c(x^2y)}\end{array}$};
  \node[state]                  (E) [right of=B]       {$\begin{array}{c}x^2\\\hline{\scriptstyle-1\cdot c(x^3)} \end{array}$};
  \node[state, accepting]        (F) [right of=D] {$\begin{array}{c}y\\\hline \begin{array}{c} {\scriptstyle 2}\\ {\scriptstyle 2\cdot c(xy)}\\{\scriptstyle1\cdot c(x^2)}\end{array}\end{array}$};
  \node[state]                  (G) [right of=E]       {$\begin{array}{c}x\\\hline {\scriptstyle -1 \cdot c(x^2)}\end{array}$};
  \node[state, accepting]       (H) [right of=G] {$\begin{array}{c}0\\ \hline{\scriptstyle 2 \cdot c(x)}\end{array}$};
  \node[]                   (I) [above right of=F] {$\underset{\underset{og}{u}}{4x^3}, \underset{\underset{og}{u}}{2x^2 y}, \underset{\underset{og}{u}}{2}y, \underset{\underset{og}{u}}{7xy}, \underset{u}{y^2}, \underset{u}{x^2}, \underset{u}{x}, 
  \underset{u}{0}$};

  \path (A) edge [bend left]             node {} (C)
            edge [bend right]             node {} (D)
        (B) edge  [bend right]            node {} (E)
            edge [] node {} (C)
        (C) edge [bend left]  node {} (F)
        (E) edge [bend right] node {} (F)
            edge [bend right] node {} (G)
        (G) edge [bend right]  node {} (H);
\end{tikzpicture}
\caption{Reduction machine with caching, resulting graph and associated monomial sequence.}
\label{reductionGraph}
\end{center}
\end{figure}
\end{example}

\subsection{A Java Library for Gröbner Bases}
Our prototype Java library for Gröbner bases provides implementations for the following concepts, algorithms and problems: defining multivariate polynomials with coefficients over the field of rational numbers ($\mathbb{Q}$), defining ideals, reducing a polynomial with respect to a basis, deciding whether a basis is a Gröbner basis, computing a Gröbner basis given an ideal using Buchberger's standard and improved algorithms, testing whether two given polynomials are congruent with respect to a basis, deciding whether a polynomial belongs to the ideal generated by a Gröbner basis. The available orderings are declared in the public enumeration \verb|Ordering|, as follows: lexicographic (LEX), reverse lexicographic (REVLEX), graded lexicographic (GRLEX), reverse graded lexicographic (GREVLEX). The versions of Buchberger's algorithm are declared in the public enumeration \verb|GroebnerType|, as follows: standard version (CLASSIC) and improved version (IMPROVED).

\subsection{Experimental Results}
We integrated sequential versions of the reduction machine in our implementation. We ran a collection of 20 problems selected from literature. In order to account for any discrepancy between the execution time values of two successive runs, we computed the execution time as the average time for 1000 runs. 

We compared the classic reduction and both versions of the reduction machine, integrated in the improved version of Buchberger's algorithm for computing reduced bases. We compare the results by providing the number of the problem, without its structure. For the structure of the problem, we refer the reader to the collection listed in the Appendix. The resulting bases were computed with respect to the graded lexicographic ordering (GRLEX). Results are shown in Figure~\ref{chart:results}. C indicates the use of classic reduction, RM the reduction machine and RMc the reduction machine with caching. 
\pgfplotstableread[row sep=\\,col sep=&]{
    interval & C & RM & RMc \\
    1     & 234.227731  & 119.622076 & 136.947616 \\
    2     & 101.27696 & 105.201283 & 100.431257 \\
    3     & 89.133154 & 58.919506 & 80.247366 \\
    4     & 98.763084 & 94.514982 & 65.0487 \\
    6     & 26.877505  & 40.394339 & 55.636601 \\
    7     & 20.158077  & 13.392925 & 15.655764 \\
    8     & 22.841591  & 26.108469 & 28.018169 \\
    9     & 12.961462  & 15.153644 & 15.928023 \\
    11    & 76.676436  & 142.942679 & 114.37193 \\ 
    12    & 20.045247  & 33.717867 & 51.739979 \\
    13    & 86.039023  & 61.054938 & 51.184555 \\
    14    & 47.01247  & 58.947765 & 62.741569 \\
    16    & 48.510105  & 53.877727 & 58.890123 \\
    17    & 44.831707  & 96.252473 & 66.989209 \\
    18    & 138.989664  & 152.796006 & 161.905974 \\
    }\RTI
   
\pgfplotstableread[row sep=\\,col sep=&]{
    interval & C & RM & RMc \\
    5     & 529.799822  & 926.3537 & 597.447242 \\
    10    & 303.277355  & 310.425133 & 321.958031 \\
    cyclic-3 & 231.070074 & 141.595445 & 159.200319 \\
    cyclic-4 & 614.188353 & 917.519465 & 711.187494 \\
    }\RTII
    
\begin{figure}[ht!]
\centering
\begin{tikzpicture}
    \begin{axis}[
            ybar,
            bar width=.2cm,
            width=\textwidth,
            height=.4\textwidth,
            legend style={at={(0.86,1)},
                anchor=north,legend columns=-1},
            symbolic x coords={1, 2, 3, 4, 6, 7, 8, 9, 11, 12, 13, 14, 16, 17, 18},
            xtick=data,
            ytick={0, 50, 100, 150, 200, 250},
            nodes near coords align={horizontal},
            ymin=0,ymax=250,
            ylabel={Time (ms)},
        ]

        \addplot table[x=interval,y=C]{\RTI};
        \addplot table[x=interval,y=RM]{\RTI};
        \addplot table[x=interval,y=RMc]{\RTI};
        \legend{C, RM, RMc}
    \end{axis}
\end{tikzpicture}

\begin{tikzpicture}
    \begin{axis}[
            ybar,
            bar width=.2cm,
            width=.6\textwidth,
            height=.4\textwidth,
            legend style={at={(0.86,1)},
                anchor=north,legend columns=-1},
            symbolic x coords={5, 10, cyclic-3, cyclic-4},
            xtick=data,
            nodes near coords align={horizontal},
            ymin=0,ymax=1000,
            ylabel={Time (ms)},
        ]

        \addplot table[x=interval,y=C]{\RTII};
        \addplot table[x=interval,y=RM]{\RTII};
        \addplot table[x=interval,y=RMc]{\RTII};
    \end{axis}
\end{tikzpicture}
\caption{Experimental results.}
\label{chart:results}
\end{figure}
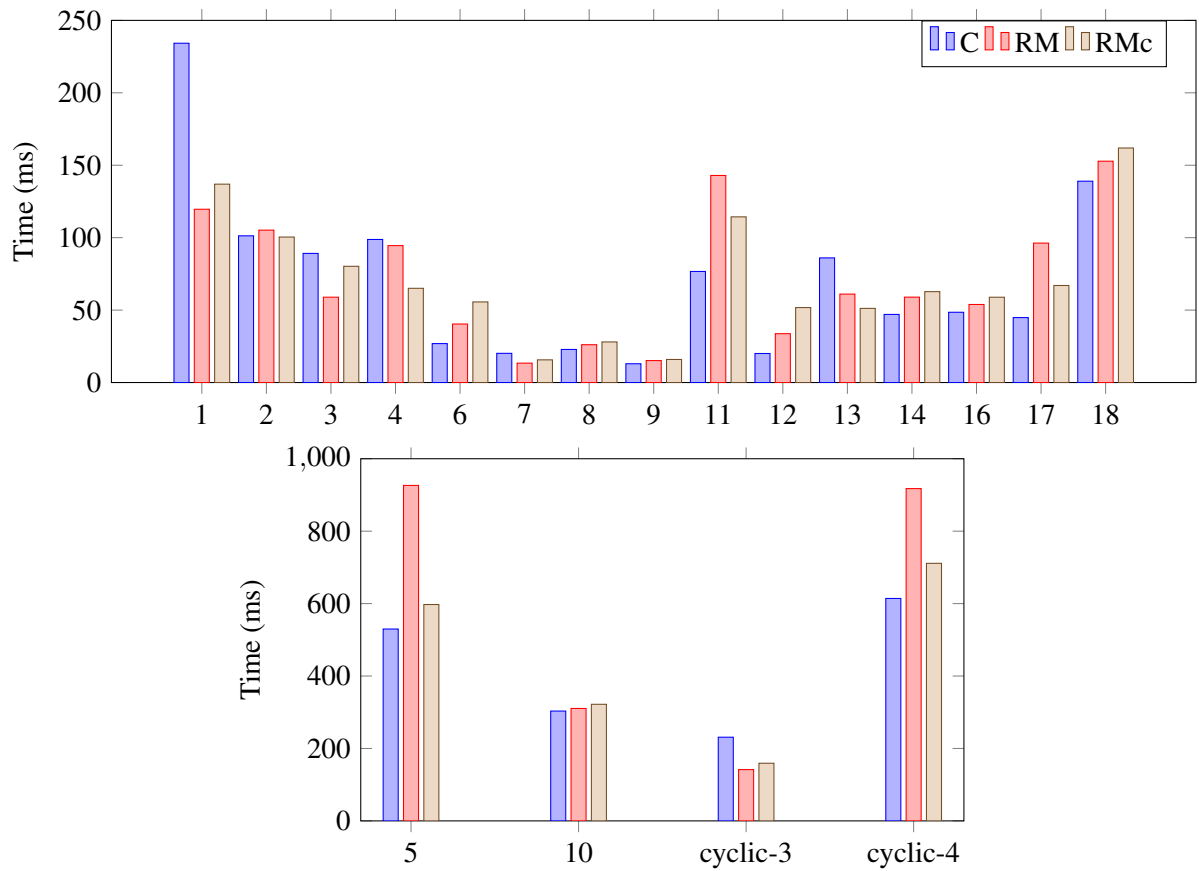

From the experimental viewpoint, the method offers promising results. We can see that in some cases, reduction machines offer some improvements on the classic reduction, while in others they do not. However, the differences are not significant in most cases. Note that so far we only implemented sequential versions. We expect parallel versions to behave more efficiently.

\section{Related Work}
\label{related}
In addition to the contributions mentioned in Subsection~\ref{GBimprovements}, such as Faug\`{e}re's F4, F5, there were other attempts to use different forms of parallelism in order to improve computations of Gr\"obner bases, see for example~\cite{kredel}. These efforts, however, focus on making several reductions in the same time. Reduction machines bring parallelism inside the reduction. In this sense, our work is complementary to such approaches. 

Our implementation is an early prototype, therefore, at this moment we do not compare it to other similar applications such as Java Algebra System (JAS), see~\cite{kredel}, or well established computer algebra systems such as Mathematica, Maple, CoCoA, Magma, Macaulay2, Singular, Sage, etc.

In fact, we tried out the different reduction methods with our prototype implementation in order to have a level playing field for their comparison. Of course, we do not yet have the parallel version of the reduction machines in our implementation, but we expect these to perform much better. A practical way forward is, perhaps, to implement reduction machines in the various existing systems.  

\section{Conclusions and Future Work}
\label{conclusions}

Our main contribution in this paper is the concept of reduction machines and its implementation, together with the proof that they are equivalent to corresponding reduction processes. In fact, we observed the behaviour that led to the idea of reduction machines while trying to prove Theorem~\ref{theorem:sameRemainder}. 

We have a prototype implementation, in Java, of Gr\"obner bases which we used to test the impact of reduction machines on the performance of the Gr\"obner bases algorithm. The results are promising: taking into account that the sequential versions give relatively similar results to the classic reduction. The next obvious step is to implement parallel versions and test them extensively. We are also considering integrating reduction machines in existing open-source systems. 

Also, part of future research is an analysis of the complexity of reduction machines, both in the sequential and in the parallel case. \vspace{7.7 cm}

\pagebreak
\bibliographystyle{eptcs}
\bibliography{bibGroebner}
\pagebreak
\section*{Appendix}
\label{appendix}
\subsubsection*{Collection of problems}
\label{collection}

\begin{enumerate}
	\item $I =\,<x^2+y+z-1, x+y^2+z-1, x+y+z^2-1>$ \cite{cox} 
	\item $I =\,<x^2+y^2+1, x^2  y+2  x  y+x>$ \cite{adams} 
	\item $I =\,<x^2  y-1, x  y^2-x>$ \cite{cox} 
	\item $I =\,<x^2+y^2+z^2-1, x^2+z^2-y, x-z>$ \cite{cox} 
	\item $I =\,<x  z-y^2+z, x^2+y, x  y+1>$ 
	\item $I =\,<x  y-2  y, 2  y^2-x^2>$  \cite{theorists} 
	\item $I =\,<y-x^3, z-x^5>$ 
	\item $I =\,<y  x-x, y^2-x>$ \cite{adams} 
	\item $I =\,<y-x^2, z-x^3>$ \cite{cox} 
	\item $I =\,<x^3  y  z-x  z^2, x  y^2  z-x  y  z, x^2  y^2-z^2>$ \cite{buchberger1985} 
	\item $I =\,<3  x^2  y+2  x  y+y+9  x^2+5  x-3, 2  x^3  y-x  y-y+6  x^3-2  x^2-3  x+3, x^3  y+x^2  y+3  x^3+2  x^2>$ \cite{buchberger1985} 
	\item $I =\,<2  x  y^2+3  x+4  y^2, y^2-2  y-2>$ \cite{adams} 
	\item $I =\,<x^2  y^2+x  y, y^4-y^2>$ 
	\item $I =\,<x^2  y-y+x, x  y^2-x>$ \cite{adams} 
    \item $I = \,<xy-2yz-z, y^2-x^2z+xz, z^2-y^2x+x>$ \cite{theorists} 
	\item $I =\,<x^3+y^2+4  x  y, x  y+1, z^3+2  x^2  y-2  z>$ 
	\item $I =\,<x  y^2-x  z+y, x  y-z^2, x-y  z^4>$ \cite{cox} 
	\item $I =\,<x^4  y^2-z, x^3  y^3-1, x^2  y^4-2  z>$ \cite{cox} 
	\item $I =\,<x+y+z, xy+yz+zx, xyz-1>$ [cyclic-3]
	\item $I =\,<w+x+y+z, wx+xy+yz+zw, wxy+xyz+yzw+zwx, wxyz-1>$ [cyclic-4]
\end{enumerate}
\end{document}